\documentclass[11pt]{article}
\usepackage[margin=1.05in]{geometry}
\usepackage{amsmath,amssymb,amsthm}
\usepackage{graphicx}
\usepackage{booktabs}
\usepackage{algorithm}
\usepackage{algpseudocode}
\usepackage[colorlinks=true,linkcolor=blue!60!black,citecolor=blue!60!black,urlcolor=blue!60!black]{hyperref}
\usepackage{xcolor}
\usepackage{graphicx}
\usepackage{pdflscape}
\newcommand{\figslot}[2]{\IfFileExists{#1}{\includegraphics[width=#2]{#1}}{\fbox{\parbox[c][0.55#2][c]{0.96#2}{\centering\small\texttt{\detokenize{#1}}}}}}

\newtheorem{theorem}{Theorem}
\newtheorem{lemma}{Lemma}
\newtheorem{proposition}{Proposition}
\newtheorem{corollary}{Corollary}
\theoremstyle{definition}
\newtheorem{definition}{Definition}

\newcommand{\supp}{\operatorname{supp}}
\newcommand{\card}{\operatorname{card}}
\newcommand{\Ipos}{I^{+}}
\newcommand{\Ineg}{I^{-}}
\newcommand{\Izero}{I^{0}}
\newcommand{\LPprune}{\mathrm{prune}}
\newcommand{\FMstep}{\mathrm{elim}}

\title{Accelerating Fourier--Motzkin elimination:\\ redundancy removal and the choice of variable elimination order}
\author{Shashaank Khanna\\[3pt]
\small Department of Mathematics, University of York, Heslington, York, YO10 5DD, United Kingdom\\
\small Aix-Marseille University, CNRS, LIS, Marseille, France\\
\small \texttt{shashaank.khanna@lis-lab.fr}}
\date{\today}

\begin{document}
\maketitle

\begin{abstract}
Fourier--Motzkin elimination computes an inequality description of the projection of a polyhedron onto a subset of its coordinates by eliminating one variable at a time. It is used in several areas of optimisation and computer science, and it is a standard way of obtaining the entropic constraints of a causal structure, where the marginalisation over the latent variables produces such a projection. Its limitation is the growth of the intermediate systems of inequalities, which can be doubly exponential in the number of eliminated variables even though the projection itself grows only as a single exponential. In practice the computational overload of the method therefore depends on two choices: how the redundant inequalities are removed after each step, and the order in which the variables are eliminated. We consider both. We first show, by an explicit example, that Imbert's redundancy test cannot be interleaved with redundancy removal by linear programming. We show that the two methods, however, can be combined soundly if the derivation records used by Imbert's test are re-initialised after every step at which linear programming is used. We then propose a rule for choosing the elimination order of the variables that gives a significant computational advantage, however, at the cost of increased resource usage. We demonstrate this advantage on some random polytopes, where the rule reduces the running time by factors of between $6$ and $25$ compared with the same elimination under a fixed order. For entropic descriptions of causal structures, with more than $250$ inequalities and more than $100$ variables to eliminate, our rule keeps the number of inequalities handled at each step one to two orders of magnitude lower than a fixed order. 
\end{abstract}

\section{Introduction}
\label{sec:intro}

Fourier--Motzkin \cite{fourier1824,motzkin1936} elimination computes the projection of a polyhedron by one variable at a time. Each step replaces the current system of inequalities by one in which the chosen variable no longer appears and whose solution set is the projection of the previous solution set along that variable. The method is simple to implement and it is used in integer programming~\cite{dantzig1972,williams1976}, in robust optimisation~\cite{zhen2018,yanikoglu2019}, in the integration of polynomials over polyhedra~\cite{baldoni2011,schechter1998}, in compiler optimisation~\cite{pouchet2008}, in circuit design~\cite{stehr2007}, and in the conversion between the vertex and the half-space descriptions of a polytope~\cite{ziegler1995}. 

Beyond the above applications, Fourier--Motzkin elimination is also used in the projection of an entropic polyhedral cone, corresponding to a causal structure, onto a subset of its variables. A causal structure is a directed acyclic graph over observed and latent nodes, and one of the basic questions about it is which correlations of the observed variables it allows~\cite{pearl2009,hlp2014,khanna2024}. One approach to this question works with entropies rather than probabilities~\cite{braunstein1988,fritzchaves2013,chaves2014,chaves2015,weilenmann2017}. The entropies of all the subsets of the variables satisfy Shannon's inequalities together with the conditional-independence equalities implied by the graph, and the entropic constraints on the observed variables alone are obtained by eliminating every entropy coordinate that involves a latent variable. This elimination is the projection of a polyhedral cone onto a subset of its coordinates. 

The limitation of Fourier--Motzkin (FM) elimination is well known. Eliminating a variable from a system of $m$ inequalities can produce up to $m^2/4$ inequalities, so $d$ successive eliminations can produce a number of inequalities that is doubly exponential in $d$~\cite{duffin1974,huynh1992}, even though the projection itself grows only as a single exponential (Lemma~\ref{lem:support} below; see also~\cite{monniaux2010}). Almost all of the inequalities produced are therefore redundant, i.e., implied by the others, and any implementation has to remove the redundant inequalities as it goes. Its running time then depends strongly on the order in which the variables are eliminated~\cite{huynh1992,simonking2005}.

The systems we need to project are often large. For e.g., a causal structure with four observed and three latent nodes already leads to a system of several hundred inequalities, and if from that system, say, $112$ of the $127$ entropy coordinates have to be eliminated (like in Section~\ref{sec:causal}), standard Fourier--Motzkin elimination becomes computationally very expensive, often intractable. To our knowledge, the entropic characterisations that have been computed so far concern structures with fewer nodes than this, and the cost of the elimination is what stands in the way of larger ones. In this work we consider the two choices on which the method depends in practice: how the redundant inequalities are removed, and in which order the variables are eliminated.

Two redundancy tests are in common use. The first solves one linear programme (LP) per inequality and decides its redundancy exactly, but each LP involves the whole of the current system. The second, due to Imbert~\cite{imbert1990,imbert1993}, inspects only the way each inequality was derived; it costs almost nothing, but detects only some of the all the redundancies. Since the two tests have complementary strengths, it is natural to apply Imbert's test after each elimination step and then the LP test to whatever inequalities survive. Our first result is that such a combination is not sound (Proposition~\ref{prop:incompat}): we give a simple system of four inequalities in four variables for which the above combination deletes every inequality of the projection. The reason is that Imbert's test certifies redundancy with respect to the complete set of inequalities generated by the elimination, whose record it uses in the next step, while the LP test removes members of this set itself, possibly including the one that makes the certificate valid. We then show that the two tests can be combined in any pattern, provided that the derivation records used by Imbert's test are re-initialised after every step at which the LP test is applied (Theorem~\ref{prop:sound}); the block alternation used in our code is a special case.

Our second result concerns the elimination order. The usual greedy rule eliminates the variable that produces the fewest inequalities. This count though includes the redundant inequalities, and on our instances the rule was much slower than a random order. We propose instead to eliminate every remaining variable tentatively, to prune each of the resulting systems by deleting any redundancies, and to keep the variable that leaves the fewest inequalities (Algorithm~\ref{alg:look-ahead}). The tentative eliminations are independent of one another, so they run in parallel, and the order found is itself worth keeping, since replaying it later costs a single ordinary run. On some random polytopes the rule reduced the wall-clock running time by factors of between $6$ and $25$ relative to the same elimination under a fixed order, and on some causal-structure instances, each with more than $250$ inequalities and more than $100$ variables to eliminate, it kept the number of inequalities handled at each step one to two orders of magnitude lower (Section~\ref{sec:results}) than compared to a fixed order of elimination of the variables.

The paper is organised as follows. Section~\ref{sec:fm} recalls the FM elimination step, illustrates it on some small examples, and states its correctness and the two growth bounds; the longer proofs are stated in Appendix~\ref{app:proofs}. Section~\ref{sec:redundancy} describes the two redundancy tests, gives the example showing that they cannot be interleaved, and gives the condition under which they can be combined. Section~\ref{sec:order} presents the rule for choosing the elimination order. Section~\ref{sec:causal} describes the systems arising from causal structures, Section~\ref{sec:results} reports the computational results and Section~\ref{sec:conclusion} concludes the paper.

\section{Fourier--Motzkin elimination}
\label{sec:fm}

\subsection{The elimination step}
\label{sec:fmstep}

Let $A\in\mathbb{R}^{m\times n}$ and $b\in\mathbb{R}^{m}$, and consider the system
\begin{equation}
Ay \le b, \qquad y\in\mathbb{R}^{n},
\label{eq:system}
\end{equation}
with the solution set $P=\{y\in\mathbb{R}^{n} : Ay\le b\}$. We write $a_{ik}$ for the entries of $A$ and $b_i$ for those of $b$. To eliminate the variable $y_s$, partition the row indices according to the sign of the coefficient of $y_s$ in the inequalities,
\begin{equation}
\Ipos=\{i: a_{is}>0\},\qquad \Ineg=\{j: a_{js}<0\},\qquad \Izero=\{l: a_{ls}=0\}.
\end{equation}
Dividing the rows in $\Ipos$ and $\Ineg$ by $|a_{is}|$, each inequality in $\Ipos$ becomes an upper bound on $y_s$ and each inequality in $\Ineg$ a lower bound,
\begin{align}
y_s &\le U_i(\hat y):=\frac{1}{|a_{is}|}\Big(b_i-\sum_{k\neq s}a_{ik}y_k\Big), && i\in\Ipos, \label{eq:upper}\\
y_s &\ge L_j(\hat y):=\frac{1}{|a_{js}|}\Big(\sum_{k\neq s}a_{jk}y_k-b_j\Big), && j\in\Ineg, \label{eq:lower}
\end{align}
where $\hat y=(y_1,\dots,y_{s-1},y_{s+1},\dots,y_n)$. The elimination step outputs the inequalities of $\Izero$, unchanged, together with the inequalities
\begin{equation}
L_j(\hat y)\ \le\ U_i(\hat y), \qquad i\in\Ipos,\ j\in\Ineg,
\label{eq:pairs}
\end{equation}
one for each pair. Each of these is free of $y_s$, so the output is a system in the $n-1$ variables $\hat y$ with $|\Ipos|\,|\Ineg|+|\Izero|$ inequalities. If $\Ipos$ is empty there are no pairs and the inequalities in $\Ineg$ are simply dropped, and likewise if $\Ineg$ is empty. If all three sets are empty the output is the empty system, whose solution set is $\mathbb{R}^{n-1}$.

\subsection{Correctness}
\label{sec:correctness}

Lemma~\ref{lem:step} and Corollary~\ref{cor:global} below are the standard correctness statements for the method, see, e.g., \cite[Chapter~12]{schrijver1986} and \cite[Lecture~1]{ziegler1995}; we include the short arguments because we use them repeatedly. Let $\pi_s:\mathbb{R}^{n}\to\mathbb{R}^{n-1}$ denote the map that deletes the $s$-th coordinate. The content of the elimination step is easy to state in words: the inequalities \eqref{eq:pairs} say that every lower bound on $y_s$ lies below every upper bound, which is the condition for a value of $y_s$ to exist. The following lemma makes this precise.

\begin{lemma}
\label{lem:step}
Let $P'\subseteq\mathbb{R}^{n-1}$ be the solution set of the system output by the elimination of $y_s$. Then $P'=\pi_s(P)$. In other words, $\hat y$ satisfies the new system if and only if there is a value of $y_s$ for which $(y_1,\dots,y_s, \dots,y_n)$ satisfies the original one.
\end{lemma}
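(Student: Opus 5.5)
The plan is to prove the two inclusions $\pi_s(P)\subseteq P'$ and $P'\subseteq\pi_s(P)$ separately. Throughout I use the rewriting of the rows in $\Ipos$ and $\Ineg$ as \eqref{eq:upper} and \eqref{eq:lower}. Dividing a row by the positive number $|a_{is}|$ preserves the inequality, so for any $y\in\mathbb{R}^n$ the following hold:
\begin{itemize}
\item for $i\in\Ipos$, $y$ satisfies row $i$ if and only if $y_s\le U_i(\hat y)$;
\item for $j\in\Ineg$, $y$ satisfies row $j$ if and only if $y_s\ge L_j(\hat y)$;
\item for $l\in\Izero$, whether $y$ satisfies row $l$ depends only on $\hat y$.
\end{itemize}
I would record these three equivalences first.

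\textbf{The inclusion $\pi_s(P)\subseteq P'$.} Take $y\in P$ and set $\hat y=\pi_s(y)$.
\begin{itemize}
\item Each row in $\Izero$ is satisfied by $\hat y$, since it does not involve $y_s$.
\item For each pair $i\in\Ipos$, $j\in\Ineg$ we have $L_j(\hat y)\le y_s\le U_i(\hat y)$. Hence \eqref{eq:pairs} holds.
\end{itemize}
So $\hat y\in P'$.

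\textbf{The inclusion $P'\subseteq\pi_s(P)$.} Take $\hat y\in P'$. I have to produce a value $t$ with $(y_1,\dots,y_{s-1},t,y_{s+1},\dots,y_n)\in P$. Put
\[
\ell=\max_{j\in\Ineg}L_j(\hat y),\qquad u=\min_{i\in\Ipos}U_i(\hat y),
\]
with the conventions $\ell=-\infty$ when $\Ineg=\emptyset$ and $u=+\infty$ when $\Ipos=\emptyset$. Both are attained, since they are finite max/min over finite sets.
\begin{itemize}
\item If both index sets are nonempty, \eqref{eq:pairs} applied to the maximising $j$ and minimising $i$ gives $\ell\le u$.
\item If either set is empty, $\ell\le u$ holds trivially.
\end{itemize}
Now choose any real $t$ with $\ell\le t\le u$: take $t=\ell$ if $\ell$ is finite, $t=u$ if only $u$ is finite, and $t=0$ otherwise. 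By the three equivalences, the point with $y_s=t$ satisfies every row. The rows in $\Izero$ hold because $\hat y\in P'$ and they do not depend on $t$.

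The main point needing care is the degenerate cases in the second inclusion, where $\Ipos$ or $\Ineg$ (or both) is empty. Then \eqref{eq:pairs} is vacuous and the output consists of $\Izero$ alone, possibly the empty system with solution set $\mathbb{R}^{n-1}$. These cases are handled by the $\pm\infty$ conventions and the explicit choice of $t$ above, which always yields a finite value. Otherwise the argument is routine, using only the finiteness of the index sets and the sign-preserving division by $|a_{is}|$.
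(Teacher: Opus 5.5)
Your proof is correct and follows the paper's argument essentially step for step. Both inclusions are proved the same way: the nonempty case uses $\max_{j\in\Ineg}L_j(\hat y)\le\min_{i\in\Ipos}U_i(\hat y)$, and your $\pm\infty$ conventions cover the degenerate cases that the paper treats one by one.
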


\begin{proof}
Suppose first that $y\in P$. Each inequality in $\Izero$ holds at $y$ and does not involve $y_s$, so it holds at $\hat y$. Each inequality \eqref{eq:pairs} is obtained by dividing an inequality in $\Ipos$ and one in $\Ineg$ by positive numbers and adding the results, so it also holds at $y$, and since $y_s$ cancels in the sum it holds at $\hat y$. Hence $\pi_s(P)\subseteq P'$.

Conversely, suppose that $\hat y\in P'$. If $\Ipos$ and $\Ineg$ are both non-empty, let $L(\hat y)=\max_{j\in\Ineg}L_j(\hat y)$ and $U(\hat y)=\min_{i\in\Ipos}U_i(\hat y)$. The inequalities \eqref{eq:pairs} say precisely that $L_j(\hat y)\le U_i(\hat y)$ for all $i\in\Ipos$ and $j\in\Ineg$, i.e., that $L(\hat y)\le U(\hat y)$. Take any $y_s$ in the interval $[L(\hat y),U(\hat y)]$. Then \eqref{eq:upper} and \eqref{eq:lower} hold, and the inequalities in $\Izero$ hold because $\hat y\in P'$. If $\Ipos$ is empty, the original system constrains $y_s$ only from below, through \eqref{eq:lower}, and any $y_s\ge L(\hat y)$ will do; if $\Ineg$ is empty, any $y_s\le U(\hat y)$ will do; if both are empty, $y_s$ is arbitrary. In each case we have found $y_s$ with $(y_1,\dots,y_s, \dots,y_n)\in P$, so $P'\subseteq\pi_s(P)$.
\end{proof}

\begin{corollary}
\label{cor:global}
Eliminate the variables one at a time, in any order. After each step the current system describes the projection of $P$ onto the remaining variables. In particular:
\begin{enumerate}
\item[(i)] After all $n$ variables have been eliminated the system consists of inequalities between numbers, $0\le c_r$, and $P$ is non-empty if and only if $c_r\ge 0$ for all $r$.
\item[(ii)] The solutions of \eqref{eq:system} can be enumerated by back-substitution: choosing values for the variables one at a time, in the reverse of the elimination order, each within the interval between the largest lower bound and the smallest upper bound placed on it by the system at the corresponding stage (an infinite endpoint when there is no bound of that kind), produces exactly the points of $P$.
\end{enumerate}
\end{corollary}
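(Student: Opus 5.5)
The plan is to obtain Corollary~\ref{cor:global} by iterating Lemma~\ref{lem:step}, proving the main claim by induction on the number of eliminated variables. Suppose the variables $y_{s_1},\dots,y_{s_k}$ have been eliminated in this order, and write $\pi_{S_k}$ for the coordinate projection that deletes them. The induction hypothesis is that the current system has solution set $\pi_{S_k}(P)$; for $k=0$ the system is \eqref{eq:system} and there is nothing to prove. The next step eliminates $y_{s_{k+1}}$ from a system of the form \eqref{eq:system} in the remaining coordinates, so Lemma~\ref{lem:step} applies verbatim with $P$ replaced by $\pi_{S_k}(P)$. It gives the new solution set as the projection of $\pi_{S_k}(P)$ along $y_{s_{k+1}}$, and since projections that delete coordinates compose, this is $\pi_{S_{k+1}}(P)$. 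The only bookkeeping I need is to re-index coordinates after each deletion, and nothing depends on the order.

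For (i), I take $k=n$. Each remaining inequality has all coefficients zero, so it reads $0\le c_r$. The projection of $P$ onto $\mathbb{R}^0$ is the one-point space if $P\neq\emptyset$ and empty otherwise. The solution set of the final system is the one-point space exactly when every $c_r\ge0$, and empty otherwise. Comparing the two descriptions gives the equivalence. The empty system, with no $r$, is covered by the convention in Section~\ref{sec:fmstep}.

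For (ii), I argue by downward induction on the stages. The claim at stage $k$ is that the points produced by choosing the eliminated variables $y_{s_k},y_{s_{k-1}},\dots,y_{s_1}$ in reverse, starting from a point of $\pi_{S_k}(P)$, are exactly the points of $P$ lying over it. The key step is the converse half of the proof of Lemma~\ref{lem:step}. It shows that for $\hat y$ in the stage-$k$ projection, the admissible values of the eliminated variable are exactly the interval $[L(\hat y),U(\hat y)]$ built from the stage-$(k-1)$ system, with an infinite endpoint when $\Ipos$ or $\Ineg$ is empty. The inequalities in $\Izero$ impose nothing on that variable and already hold at $\hat y$. The interval is non-empty because $\hat y$ satisfies the pair inequalities \eqref{eq:pairs}. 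Conversely, any point of the stage-$(k-1)$ projection lying over $\hat y$ has its coordinate in that interval, so the chosen values are exactly the admissible ones. Chaining this equivalence from $k=n$ down to $k=0$ shows that back-substitution produces exactly $P$.

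I expect the main obstacle to be (ii), specifically stating cleanly that the interval is taken with respect to the intermediate system at the corresponding stage and not the original one. This is what makes the ``exactly'' hold in both directions. Part (i) and the main claim are routine once Lemma~\ref{lem:step} is in hand.
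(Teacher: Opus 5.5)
Your proof is correct and follows essentially the same route as the paper's: induction on the number of eliminated variables via Lemma~\ref{lem:step} for the main claim, the one-point-or-empty description of $\mathbb{R}^0$ for (i), and, for (ii), the converse half of the proof of Lemma~\ref{lem:step} applied to the system just before each variable is eliminated. The only difference is cosmetic: you organise (ii) as an induction over the stages, whereas the paper argues it directly, showing that the successive choices produce a point of $P$ and that every $y^{*}\in P$ has each coordinate in the corresponding interval.
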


The proof is short and is given in Appendix~\ref{app:proofs}.

As an illustration, consider the system
\begin{equation}
2x-3y\ge 1,\qquad 5x+3y\ge 1,\qquad 8x-3y\ge 1,\qquad x\ge 0,\qquad 3y\ge 0.
\label{eq:example}
\end{equation}
Eliminating $y$ combines each of the two inequalities in which $y$ has a positive coefficient with each of the two in which it has a negative coefficient, and copies $x\ge 0$. This gives $7x\ge 2$, $13x\ge 2$, $2x\ge 1$, $8x\ge 1$ and $x\ge 0$, of which only $2x\ge 1$ is non-redundant, so the projection of the solution set onto the $x$-axis is the ray $x\ge\frac12$. Taking $x=1$ and substituting back into \eqref{eq:example} gives $0\le y\le\frac13$, so, for example, $(x,y)=(1,0)$ is a solution, in accordance with Corollary~\ref{cor:global}(ii). Note also that the coefficient of $x$ is positive in every inequality of \eqref{eq:example} in which $x$ appears. Eliminating $x$ therefore forms no pairs, all inequalities involving $x$ are dropped, and the output is the single inequality $3y\ge 0$. This is correct: the solution set is unbounded in the direction of increasing $x$, and its projection onto the $y$-axis is $\{y\ge 0\}$. Two further examples, one in three dimensions and one in which infeasibility is detected through a violated numerical inequality, follow.

\subsection{Two further examples}
\label{sec:examples}

The projection computed by the elimination is easiest to see in three dimensions. Consider the system
\begin{equation}
-2\le x\le 2,\qquad -2\le y\le 2,\qquad -2\le z\le 2,\qquad 6x-y+2z\le 4,\qquad x+y-z\le -1,
\label{eq:box}
\end{equation}
whose solution set is the polyhedron shown in Fig.~\ref{fig:poly3d}. To eliminate $z$ we have $\Ipos=\{z\le 2,\ 6x-y+2z\le 4\}$, $\Ineg=\{-z\le 2,\ x+y-z\le -1\}$, and the four bounds on $x$ and $y$ in $\Izero$. The four pairs give $0\le 4$, $x+y\le 1$, $6x-y\le 8$ and $8x+y\le 2$, so the output is
\begin{equation}
-2\le x\le 2,\qquad -2\le y\le 2,\qquad 0\le 4,\qquad 8x+y\le 2,\qquad x+y\le 1,\qquad 6x-y\le 8,
\end{equation}
a system in $x$ and $y$ only. Its solution set, shown in Fig.~\ref{fig:poly2d}, is the shadow of the polyhedron of Fig.~\ref{fig:poly3d} on the $x$--$y$ plane, as Lemma~\ref{lem:step} requires. The inequality $0\le 4$ comes from the pair $z\le 2$, $-z\le 2$; it is redundant and is removed by either of the tests of Section~\ref{sec:redundancy}.

Infeasibility appears as a violated numerical inequality, as in Corollary~\ref{cor:global}(i). The system
\begin{equation}
x\ge 1,\qquad x\le -1,\qquad -1\le y\le 1
\end{equation}
has no solution. Eliminating $x$ combines the first two inequalities into $0\le -2$, which shows that the system, and hence each of its projections, is empty.

\begin{figure}[t]\centering
\begin{minipage}[t]{0.48\linewidth}\centering
\figslot{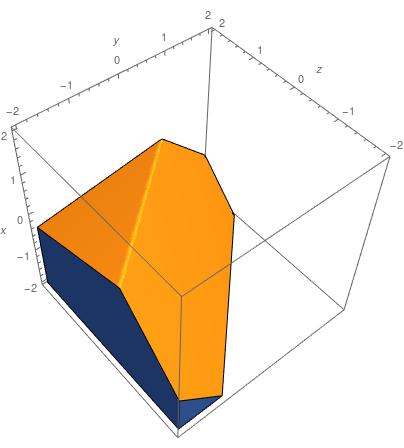}{\linewidth}
\caption{The polyhedron defined by \eqref{eq:box}.}
\label{fig:poly3d}
\end{minipage}\hfill
\begin{minipage}[t]{0.48\linewidth}\centering
\figslot{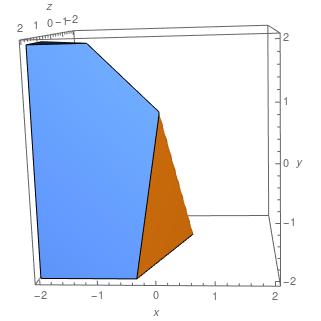}{\linewidth}
\caption{Its projection onto the $x$--$y$ plane, obtained by eliminating $z$.}
\label{fig:poly2d}
\end{minipage}
\end{figure}

\subsection{Growth of the intermediate systems}
\label{sec:growth}

The example above already contains redundant inequalities, and at scale these are the main obstacle to the computation.

\begin{proposition}
\label{prop:growth}
An elimination step applied to a system of $m$ inequalities produces at most $\max(m,\lfloor m^{2}/4\rfloor)$ inequalities. Consequently, if $m\ge 4$, after $d$ elimination steps the system has at most $4\,(m/4)^{2^{d}}$ inequalities.
\end{proposition}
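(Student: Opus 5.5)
Write $p=|\Ipos|$, $q=|\Ineg|$ and $z=|\Izero|$, so that $p+q+z=m$ and, as in Section~\ref{sec:fmstep}, the output has $pq+z$ inequalities. The plan has two parts: bound $pq+z$ for a single step, then iterate that bound by induction on $d$.

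\textbf{Single step.} I would fix $k=p+q$, so $z=m-k$. By AM--GM, $pq\le\lfloor k^2/4\rfloor$, so the output count is at most $f(k)=\lfloor k^2/4\rfloor+m-k$ for $0\le k\le m$. The quantity $k^2/4-k$ is convex in $k$, so its maximum over $[0,m]$ is attained at an endpoint. At $k=0$ the bound is $m$; at $k=m$ it is $\lfloor m^2/4\rfloor$. For integer $k$ the floor does not spoil this: $f(k+1)-f(k)=\lfloor (k+1)^2/4\rfloor-\lfloor k^2/4\rfloor-1$, which equals $\lfloor k/2\rfloor$ after simplifying via $\lfloor (k+1)^2/4\rfloor-\lfloor k^2/4\rfloor=\lceil k/2\rceil$ (checked separately for $k$ even and odd). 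These increments are nondecreasing, so $f$ is discretely convex and its maximum is $\max(f(0),f(m))=\max(m,\lfloor m^2/4\rfloor)$. This gives the first claim.

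\textbf{Iteration.} For $m\ge4$ we have $m^2/4\ge m$, so one step maps $m$ to at most $m^2/4$. Setting $g(x)=\max(x,x^2/4)$, the map $g$ is nondecreasing, and I would prove $m_d\le 4(m/4)^{2^d}$ by induction. The base case is $m_0=m=4(m/4)^{1}$. For the inductive step, $m_{d+1}\le g(m_d)\le g\bigl(4(m/4)^{2^d}\bigr)$. Since $M=4(m/4)^{2^d}\ge4$, we get $g(M)=M^2/4=4(m/4)^{2^{d+1}}$.

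\textbf{Main obstacle.} The point needing care is that the single-step bound is stated for the actual count $m_d$, which may be smaller than $4$ or otherwise irregular. This is handled by monotonicity: the maximum in the first claim is nondecreasing in $m$, so applying it with the upper bound $M$ in place of $m_d$ is legitimate. I would also note that the elimination step does not delete duplicates, but the count $pq+z$ already accounts for them, so this causes no gap.
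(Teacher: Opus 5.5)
Your proof is correct and follows essentially the paper's route. AM--GM reduces the count to a one-variable function of the split between $\Izero$ and $\Ipos\cup\Ineg$, which is maximised at the endpoints. The paper does this with $f(w)=(m-w)^2/4+w$, decreasing on $[0,m-2]$, and checks $w=m-1$ and $w=m$ separately, while you use discrete convexity in $k=m-w$. The induction via monotonicity of $x\mapsto\max(x,x^2/4)$ is the same in both.

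There is one arithmetic slip in your increment formula. The increment is $f(k+1)-f(k)=\lceil k/2\rceil-1=\lfloor (k-1)/2\rfloor$, not $\lfloor k/2\rfloor$; the two differ for even $k$, for example $f(1)-f(0)=-1$. With your formula $f$ would be nondecreasing, which contradicts $f(0)=m>\lfloor m^2/4\rfloor=f(m)$ for $1\le m\le 3$. The correct increments are still nondecreasing, so your conclusion stands.
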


The proof is elementary and is given in Appendix~\ref{app:proofs}. The doubly exponential bound concerns the number of inequalities generated, not the complexity of the projection. The next lemma shows that the projection always has a description of singly exponential size. The description of the projection by the extreme rays of the cone that appears in its proof is due to Balas~\cite{balas1998} (see also~\cite{monniaux2010}), and the bound on the number of non-zero entries of an extreme ray is the usual characterisation of extreme rays, see, e.g., \cite[Chapter~8]{schrijver1986}. It is also the fact behind Imbert's test in Section~\ref{sec:imbert}. For a set $D$ of variables we write $\pi_D$ for the map deleting the coordinates in $D$.

\begin{lemma}
\label{lem:support}
Let $P\neq\emptyset$ and let $D$ be a set of $d$ of the variables. Every inequality in the variables outside $D$ that is valid on $\pi_D(P)$ is implied by valid inequalities each of which is a non-negative combination of at most $d+1$ of the inequalities of \eqref{eq:system}. In particular, $\pi_D(P)$ has a description consisting of at most $\sum_{k=1}^{d+1}\binom{m}{k}$ inequalities, and if $\pi_D(P)$ is full-dimensional then every non-redundant description of it has at most this many inequalities.
\end{lemma}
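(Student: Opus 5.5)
We will prove the lemma through the projection cone of Balas together with Farkas' lemma. Split the columns of $A$ into $A_D$ (the columns indexed by $D$) and $A_{\bar D}$ (the rest), and write $z$ for the variables outside $D$. The first step characterises the valid inequalities of $\pi_D(P)$. We will show that the linear inequality $c^{\top}z\le \gamma$ is valid on $\pi_D(P)$ exactly when $c^{\top}z+0^{\top}y_D\le\gamma$ is valid on $P$. Since $P\neq\emptyset$, the affine form of Farkas' lemma then gives some $\lambda\ge 0$ with
\[
\lambda^{\top}A_D=0,\qquad \lambda^{\top}A_{\bar D}=c^{\top},\qquad \lambda^{\top}b\le\gamma .
\]
This is the point where $P\neq\emptyset$ is needed: without it, $0\le -1$ would be derivable and Farkas would return only an infeasibility certificate. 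So every valid inequality is dominated by the inequality $\lambda^{\top}A_{\bar D}z\le\lambda^{\top}b$, where $\lambda$ lies in the projection cone $C=\{\lambda\in\mathbb{R}^m:\lambda\ge 0,\ \lambda^{\top}A_D=0\}$.

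The second step reduces to extreme rays. The cone $C$ is pointed, because it sits inside the non-negative orthant. It is polyhedral, so it is generated by its finitely many extreme rays, and any $\lambda\in C$ is a non-negative combination of extreme rays $\lambda^{(1)},\dots,\lambda^{(t)}$. Each ray yields a valid inequality $\lambda^{(r)\top}A_{\bar D}z\le\lambda^{(r)\top}b$. The inequality given by $\lambda$ is the same non-negative combination of these. Weakening its right-hand side then gives $c^{\top}z\le\gamma$, so every valid inequality is implied by the ray inequalities.

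The third step, which is the main technical point, is the support bound. We show that an extreme ray $\lambda$ of $C$ has at most $d+1$ non-zero entries. Apply the usual characterisation of extreme rays: at $\lambda$, the active constraints of $C$ have rank $m-1$. These active constraints are the $d$ equalities $\lambda^{\top}A_D=0$ together with $\lambda_i=0$ for $i\notin\supp\lambda$. Their rank is at most $d+(m-|\supp\lambda|)$, so $m-1\le d+m-|\supp\lambda|$, which gives $|\supp\lambda|\le d+1$. Alternatively, a direct argument works: if the support $S$ had more than $d+1$ elements, the columns of $A_D^{\top}$ indexed by $S$ would have a kernel of dimension at least $2$. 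Any nonzero kernel vector $\mu$ not proportional to $\lambda$ can be taken with small magnitude, so $\lambda\pm\varepsilon\mu\in C$, which contradicts extremality. We will use this direct argument, since it avoids any bookkeeping about ranks. Hence every ray inequality is a non-negative combination of at most $d+1$ rows of the system.

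The fourth step does the counting. Two extreme rays with the same support are proportional, because the kernel restricted to that support is one-dimensional (otherwise the perturbation argument above applies). So there is at most one ray inequality per support $S$ with $1\le|S|\le d+1$, which gives at most $\sum_{k=1}^{d+1}\binom{m}{k}$ inequalities, and these describe $\pi_D(P)$. For the final claim, suppose $\pi_D(P)$ is full-dimensional. Then every non-redundant description consists of exactly one inequality per facet, up to positive scaling. Each facet-defining inequality is implied by the ray inequalities, and so must appear among them up to scaling, since a facet cannot be implied by valid inequalities none of which define it. The count of a non-redundant description is therefore at most that of the ray system.
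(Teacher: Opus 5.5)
Your proof is correct and follows the paper's argument essentially step for step: affine Farkas into Balas's projection cone, decomposition into extreme rays, the bound $|\supp\rho|\le d+1$, and the facet count in the full-dimensional case. For the support bound the paper uses the rank count you mention first, while you prefer the equivalent perturbation argument, which the paper uses only for the one-ray-per-support step.

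The one step you leave implicit is why the ray inequalities actually \emph{describe} $\pi_D(P)$ rather than merely imply its valid inequalities. This needs $\pi_D(P)$ to be a polyhedron (or at least closed and convex); the paper takes this from Corollary~\ref{cor:global}, and you could also get it directly from the converse direction of Farkas' lemma applied to the system $A_D y_D\le b-A_{\bar D}z$.
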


The idea of the proof is simple. By Farkas' lemma, every inequality valid on the projection is a non-negative combination of the original inequalities in which the coefficients of the eliminated variables cancel. The multiplier vectors of such combinations form a polyhedral cone, every element of which is a non-negative combination of the extreme rays of the cone, and a dimension count shows that an extreme ray has at most $d+1$ non-zero entries. The inequalities given by the extreme rays therefore describe the projection, and there is at most one extreme ray for each choice of the set of non-zero entries, which gives the count. The details are in Appendix~\ref{app:proofs}.

Applied to the intermediate projections, the lemma shows that if redundancies are removed completely after each step, then after $t$ steps the system has at most $\sum_{k=1}^{t+1}\binom{m}{k}=O(m^{t+1})$ inequalities whenever the projection at that stage is full-dimensional.\footnote{If a projection is not full-dimensional, a non-redundant description of it may contain additional inequalities which together encode its implicit equalities. We do not need a bound in this case.} The contrast between Proposition~\ref{prop:growth} and Lemma~\ref{lem:support} is the reason that redundancy removal is worthwhile: without it the method handles doubly exponentially many inequalities in order to describe an object whose description is only singly exponential. It remains to decide how the redundancies should be removed and in which order the variables should be eliminated, and we take these in turn now.

\section{Redundancy removal}
\label{sec:redundancy}

\subsection{The linear-programming test}
\label{sec:lptest}

An inequality $a^{\top}y\le\beta$ of a system $S$ is redundant in $S$ if it is implied by the other inequalities of $S$, i.e., if $\max\{a^{\top}y : y \text{ satisfies } S\setminus\{a^{\top}y\le\beta\}\}\le\beta$. This maximum is the optimal value of a linear programme, so redundancy can be decided exactly by solving one LP per inequality.\footnote{If the maximum is unbounded the inequality is certainly not redundant. Some LP solvers report an unbounded problem in free variables as infeasible, so an implementation has to check the status returned by the solver and not only the value.} Two identical inequalities are each redundant given the other, so the inequalities have to be tested one at a time against the system of those retained so far, each redundant inequality being deleted before the next is tested. This yields a subsystem with the same solution set in which no inequality is redundant. Each LP involves the whole of the current system, and after an elimination step the number of inequalities to be tested can be of order $m^2/4$. The LP test is therefore exact but expensive.

\subsection{Imbert's test}
\label{sec:imbert}

Imbert~\cite{imbert1990,imbert1993} introduced a test that inspects only the way in which each inequality was derived. Fix an initial system, which we call the root, and run the elimination while maintaining the following records for every inequality $i$ of the current system.

\begin{definition}
\label{def:imbert}
The \emph{history} $H_i$ of an inequality $i$ is the set of root inequalities used in its derivation: the history of a root inequality is the inequality itself, a copied inequality inherits its history, and the history of an inequality formed from a pair is the union of the histories of the two members of the pair. The \emph{explicit variables} $E_i$ are the variables that were eliminated by forming pairs somewhere in the derivation of $i$. The \emph{implicit variables} $I_i$ are the variables that occur in at least one member of $H_i$, but do not occur in $i$, and are not in $E_i$. Finally $O_k$ denotes the set of variables eliminated in the first $k$ steps.
\end{definition}

In words, the history records which of the root inequalities an inequality descends from, and the other records keep track of which variables have disappeared from it along the way.

\begin{theorem}[Imbert~\cite{imbert1990,imbert1993}]
\label{thm:imbert}
If, after $k$ elimination steps from the root system, an inequality $i$ of the current system violates
\begin{equation}
\card(H_i)\ \le\ 1+\card\!\big(E_i\cup(I_i\cap O_k)\big),
\label{eq:imbertcond}
\end{equation}
then $i$ is redundant in the system of inequalities generated by the elimination at that stage.
\end{theorem}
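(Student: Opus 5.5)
The plan is to argue through multipliers and the dimension count behind Lemma~\ref{lem:support}, applied to the sub-problem that involves only the root inequalities in $H_i$.

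\emph{Step 1: multiplier representation.} By induction on $k$, every inequality $i$ of the current system is a combination $\lambda^{\top}(Ay\le b)$ with $\lambda\ge 0$ and $\supp\lambda=H_i$. Root inequalities are trivial. Copying keeps $\lambda$. Forming a pair adds positive multiples of the two parents' vectors, and since both are non-negative the supports take the union. The coefficients of every variable in $O_k$ vanish in $i$.

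\emph{Step 2: dimension count.} Restrict to the submatrix $A_H$ formed by the rows in $H=H_i$. Let $V$ be the set of variables occurring in some row of $H$ but absent from $i$. Then $\lambda$ lies in the cone $C=\{\mu\ge 0:\ \mu^{\top}A_H e_v=0 \text{ for } v\in V\}$, restricted to support in $H$. The key claim is that it suffices to consider $V\cap O_k$. The explicit variables $E_i$ lie in $O_k$. The implicit ones in $I_i\cap O_k$ were also eliminated. Hence every variable of $V\cap O_k$ lies in $E_i\cup(I_i\cap O_k)$, and the other direction holds as well. So we impose $c=\card(E_i\cup(I_i\cap O_k))$ homogeneous equations on $\mu\in\mathbb{R}^{H}_{\ge0}$. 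An extreme ray of the cone $\{\mu\ge0: \mu^{\top}A_He_v=0,\ v\in E_i\cup(I_i\cap O_k)\}$ has support of size at most $c+1$. This is the same count as in Lemma~\ref{lem:support}: at an extreme ray, $\card(H)-1$ linearly independent constraints are tight, at most $c$ of them are equalities, and the rest are $\mu_h=0$. So if $\card(H_i)>c+1$, then $\lambda$, whose support is all of $H$, is not extreme. It is then a sum $\lambda=\sum_r \mu_r$ of extreme rays with strictly smaller supports, each eliminating the same variables from $O_k$.

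\emph{Step 3: the pieces lie in the generated system.} This is the main obstacle. We must show that each $\mu_r^{\top}(Ay\le b)$, possibly after relaxing the right-hand side, is implied by inequalities actually present in the complete generated system at stage $k$, not merely valid on the projection. I would proceed via the standard completeness property of FM, namely that the generated system without any deletion contains, up to positive scaling, every combination corresponding to an extreme ray of the projection cone of the root restricted to $O_k$. I would prove this by induction on $k$: an extreme multiplier that eliminates $O_{k}$ splits along the $k$-th variable as a positive/negative pair (or zero) of extreme multipliers for $O_{k-1}$. This is the delicate part, because we are using extremality in the cone for $O_{k-1}$ together with $v_k$. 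Granting it, $i$ equals $\sum_r(\text{generated inequality}_r)$ with $\lambda$ not proportional to any single $\mu_r$. Hence $i$ is a non-negative combination of other inequalities of the generated system and is therefore redundant in it. Degenerate cases, such as an $i$ that coincides up to scaling with a generated $\mu_r$ of smaller support, are handled the same way, since identical copies are mutually redundant.
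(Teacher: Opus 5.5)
The paper does not prove this theorem. It cites Imbert and only points to Lemma~\ref{lem:support}, which is what your Steps~1--2 reconstruct. Those steps are correct. The inclusion $V\cap O_k\subseteq E_i\cup(I_i\cap O_k)$ follows directly from the definition of $I_i$, and it is the only direction you need.

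Two things are missing.
\begin{enumerate}
\item[(a)] \emph{Minor.} To invoke Step~3 your pieces $\mu_r$ must be extreme rays of the full projection cone $\Lambda_{O_k}=\{\mu\in\mathbb{R}^m_{\ge 0}:(\mu^{\top}A)_{O_k}=0\}$, not merely of your cone on $\mathbb{R}^{H}$. This holds because your cone is the face $\{\mu\in\Lambda_{O_k}:\supp\mu\subseteq H\}$: the equations for variables of $O_k$ that occur in no row of $H$ are vacuous there.
\item[(b)] \emph{Substantive.} Step~3, which you grant, is a genuine gap. It is exactly where ``not needed to describe the projection'' (all that Lemma~\ref{lem:support} gives) becomes ``implied by rows actually present''. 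Without it the argument is incomplete.
\end{enumerate}

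Step~3 closes in a few lines with the support/rank language you already use.
\begin{enumerate}
\item[1.] \emph{Setup.} Let $K_j=\Lambda_{O_j}$, with $K_0=\mathbb{R}^m_{\ge0}$, whose extreme rays are the root rows. Let $v_k$ be the $k$-th eliminated variable. Let $x$ span an extreme ray of $K_k$ and put $S=\supp x$.
\item[2.] \emph{Extremality as a rank condition.} Since $K_k$ lies in the orthant, the minimal face containing $x$ is $\{\mu\in K_k:\supp\mu\subseteq S\}$. So extremality means the linear space $\{\mu:(\mu^{\top}A)_{O_k}=0,\ \mu_h=0\ (h\notin S)\}$ is one-dimensional.
\item[3.] \emph{Dropping one equation.} Removing the equation for $v_k$ leaves a space $W$ of dimension at most $2$. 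Then $G=W\cap\mathbb{R}^m_{\ge0}=\{\mu\in K_{k-1}:\supp\mu\subseteq S\}$ is a face of $K_{k-1}$ with $x$ in its relative interior, since $x_h>0$ on $S$.
\item[4.] \emph{Case $\dim G=1$.} Then $x$ is an extreme ray of $K_{k-1}$ with zero $v_k$-coefficient, i.e.\ a copied row.
\item[5.] \emph{Case $\dim G=2$.} Then $G$ is spanned by two extreme rays $r_1,r_2$ of $K_{k-1}$, and $x=\alpha r_1+\beta r_2$ with $\alpha,\beta>0$. The functional $\phi(\mu)=(\mu^{\top}A)_{v_k}$ vanishes at $x$ but not on all of $G$; otherwise $G\subseteq\{\mu\in K_k:\supp\mu\subseteq S\}$, which is one-dimensional. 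Hence $\phi(r_1)$ and $\phi(r_2)$ are non-zero with opposite signs. So $x$ is proportional to $|\phi(r_2)|\,r_1+|\phi(r_1)|\,r_2$, which is exactly the multiplier of the FM pair formed from the rows carrying $r_1$ and $r_2$.
\end{enumerate}
Induction on $k$ now gives your completeness claim, and the rest of Step~3 goes through. The generated row $j_r$ carrying $\mu_r$ differs from $i$ because $\card(H_{j_r})=\card(\supp\mu_r)<\card(H_i)$. So your degenerate case never arises.

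Two further points follow from the same count.
\begin{itemize}
\item[$\bullet$] Apply it to $S=\supp x$ for the row $j$ carrying an extreme ray $x$. It gives $\card(S)-1\le\card(V_j\cap O_k)\le\card(E_j\cup(I_j\cap O_k))$, so extreme-ray rows are never flagged. Hence your induction also runs when flagged rows are deleted as they arise. That is the form of the theorem used in the proof of Theorem~\ref{prop:sound}.
\item[$\bullet$] It locates the failure in Proposition~\ref{prop:incompat}. The LP test deletes (F1), whose multiplier $(1,1,0,0)$ is an extreme ray, and that is exactly where the completeness induction breaks.
\end{itemize}
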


Condition \eqref{eq:imbertcond} is the counterpart of Lemma~\ref{lem:support} at the level of individual derivations. Since $E_i$ and $I_i\cap O_k$ are subsets of $O_k$, the right-hand side is at most $k+1$, so an inequality whose history contains more than $k+1$ root inequalities is always flagged; by Lemma~\ref{lem:support} no such inequality is ever needed to describe the projection, and the theorem says that it is implied by the other inequalities produced by the elimination. The sets $E_i$ and $I_i$ refine the count $k$ by taking into account variables that disappear from an inequality without having been eliminated explicitly; we refer to~\cite{imbert1990} for the details. The test is sufficient for redundancy but not necessary, and it costs almost nothing to apply. There is a second theorem due to Imbert, but we do not consider it here since it guarantees neither redundancy nor non-redundancy of an inequality.

\subsection{The two tests cannot be interleaved}
\label{sec:incompat}

Since Imbert's test is cheap and the LP test is expensive, it is natural to apply Imbert's test after each elimination step and then to apply the LP test to the inequalities that survive, in the hope of solving fewer and smaller LPs. The following proposition shows that this is not sound.

\begin{proposition}
\label{prop:incompat}
There is a system of linear inequalities and an elimination order for which the following procedure outputs a system whose solution set strictly contains the projection: after each elimination step, delete the inequalities that violate \eqref{eq:imbertcond}, then delete the inequalities that the LP test finds redundant, and carry the records of Definition~\ref{def:imbert} over to the next step. In the example below the procedure outputs the empty system.
\end{proposition}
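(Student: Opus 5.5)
The plan is to prove the proposition by exhibiting one concrete system and running the procedure on it by hand. We also compute the true projection independently, using Lemma~\ref{lem:step} and Corollary~\ref{cor:global}, and check that it is a proper subset of the space of remaining variables. Once the output is shown to be the empty system, whose solution set is the whole space, strict containment follows.

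The mechanism the example must exploit is the following. Theorem~\ref{thm:imbert} certifies that an inequality is redundant only relative to the \emph{complete} set of inequalities generated by the elimination. The LP test, however, is order-dependent when several inequalities of the current system are mutually implied: for instance two positive multiples of the same inequality, or two inequalities that imply each other modulo the rest. I would therefore build a root system of four inequalities $r_1,\dots,r_4$ in variables $x_1,\dots,x_4$, eliminated in the order $x_1,x_2,x_3$, so that four things happen.

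\begin{enumerate}
\item[(a)] After the first step, the current system contains two inequalities that define the same half-space. One, $p$, has a small history. The other, $q$, has a larger history. Because the LP test is processed in the fixed order, it deletes $p$ and keeps $q$.
\item[(b)] In the next step(s), the only descendants that carry the facet of the true projection are formed from $q$. They therefore inherit its history, eventually $\{r_1,r_2,r_3,r_4\}$.
\item[(c)] For these descendants, few variables disappear along the way, so that $\card(E_i\cup(I_i\cap O_k))\le 2$. Then \eqref{eq:imbertcond} fails with $4>3$, and Imbert's test deletes them. Had $p$ been kept, its descendant with smaller history would have survived and carried the facet; this is exactly why Theorem~\ref{thm:imbert} was valid for the full generated set but not for the pruned one.
\item[(d)] All other inequalities at that stage are trivial, e.g.\ $0\le c$ with $c\ge0$, or are removed by the LP test. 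Inequalities with an empty $\Ipos$ or $\Ineg$ side are dropped by the elimination step itself. The final system is thus empty.
\end{enumerate}

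The key steps, in order, are as follows. First, write down the four inequalities and the elimination order. Second, tabulate after each step the inequalities together with $H_i$, $E_i$, $I_i$ and $O_k$ from Definition~\ref{def:imbert}. Third, record which ones Imbert's test and the LP test delete, stating the order of LP testing explicitly. Fourth, compute the true projection directly, e.g.\ by plain elimination without pruning followed by full LP pruning, and show that it is described by at least one non-trivial inequality such as $x_4\ge0$.

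The main obstacle is finding coefficients that make (a)--(c) hold simultaneously. The difficulty is that the implicit-variable set $I_i$ tends to enlarge the right-hand side of \eqref{eq:imbertcond}, and that could rescue the inequality. I would first try systems in which each root involves at most two or three variables with coefficients $\pm1$, forcing $p$ and $q$ to coincide after step one by symmetry. If hand search stalls, I would do a small computer search over $\{-1,0,1\}$ coefficient matrices, then verify the found instance by hand as above.
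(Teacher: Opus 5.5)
\textbf{There is a genuine gap: the proposal never produces the example.} For an existence statement like this one, the exhibited system together with its hand verification is the entire proof. You list properties (a)--(d) that an example should have, call finding coefficients with these properties ``the main obstacle'', and defer it to a search. As written, nothing is proved.

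\textbf{Your design cannot be realised with four roots.} After the first step, every history has either one element (a copied root, which does not contain $x_1$) or two (a pair).
\begin{itemize}
\item So if $p$ has a strictly smaller history than $q$, then $p$ is the copy of a root $r$ that enters no pair at that step, and $p$ is the only inequality whose history contains $r$.
\item Once the LP test deletes $p$, no surviving inequality ever has $r$ in its history, so every surviving history has at most three elements.
\item A history of three distinct roots needs pair formations at two different steps, so $\card(E_i)\ge 2$ and \eqref{eq:imbertcond} holds.
\end{itemize}
Hence Imbert's test never fires on the surviving branch, and (b)--(c) cannot happen. The duplicate-half-space variant also depends on the order in which the LP test processes the two copies, which is an avoidable fragility.

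\textbf{The missing idea.} The deleted inequality should not duplicate a single survivor. It should be the \emph{sum} of two survivors that have disjoint two-element histories covering all four roots and opposite signs in the next variable. The next step then re-derives the deleted half-space with $\card(H)=4$ but only two explicit variables (and $I\cap O_k\subseteq E$), so Imbert's test discards it.

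The paper's example does exactly this, with
\[
(\mathrm{E}1)\ -x+w\le 0,\qquad (\mathrm{E}2)\ x+y\le 0,\qquad (\mathrm{E}3)\ -x+y+z-w\le 0,\qquad (\mathrm{E}4)\ x-y-z+w\le 0,
\]
eliminating $x$ and then $z$.
\begin{itemize}
\item Eliminating $x$ gives:
\begin{itemize}
\item $y+w\le 0$ with $H=\{\mathrm{E}1,\mathrm{E}2\}$;
\item $2y+z-w\le 0$ with $H=\{\mathrm{E}2,\mathrm{E}3\}$;
\item $-y-z+2w\le 0$ with $H=\{\mathrm{E}1,\mathrm{E}4\}$;
\item $0\le 0$ with $H=\{\mathrm{E}3,\mathrm{E}4\}$.
\end{itemize}
Imbert flags nothing here.
\item The LP test deletes $y+w\le 0$, which is the sum of the middle two, and deletes $0\le 0$. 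This happens in any processing order, because the middle two are the only upper and lower bounds on $z$ and so are irredundant.
\item Eliminating $z$ yields $y+w\le 0$ with $H=\{\mathrm{E}1,\dots,\mathrm{E}4\}$, $E=\{x,z\}$ and $I=\emptyset$. Since $4>3$, it is deleted and the output is empty.
\item The true projection is $\{y+w\le 0\}$, witnessed by $x=w$, $z=2w-y$.
\end{itemize}
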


\begin{proof}
Consider the following system in the variables $(w,x,y,z)$, which are otherwise unconstrained:
\begin{equation}
\begin{aligned}
(\mathrm{E}1)\quad -x+w &\le 0, &\qquad (\mathrm{E}2)\quad x+y &\le 0,\\
(\mathrm{E}3)\quad -x+y+z-w &\le 0, &\qquad (\mathrm{E}4)\quad x-y-z+w &\le 0.
\end{aligned}
\label{eq:counterexample}
\end{equation}
We eliminate $x$ and then $z$. The projection onto $(w,y)$ is the half-plane $\{y+w\le 0\}$. Indeed, adding (E1) and (E2) shows that $y+w\le 0$ holds on the solution set, and conversely, if $y+w\le 0$ then $x=w$ and $z=2w-y$ satisfy all four inequalities.

\emph{Elimination of $x$.} Here $\Ipos=\{\mathrm{E}2,\mathrm{E}4\}$, $\Ineg=\{\mathrm{E}1,\mathrm{E}3\}$ and $\Izero=\emptyset$, and the four pairs give
\begin{equation}
\begin{aligned}
(\mathrm{F}1)\quad y+w&\le 0, & H&=\{\mathrm{E}1,\mathrm{E}2\},\\
(\mathrm{F}2)\quad 2y+z-w&\le 0, & H&=\{\mathrm{E}2,\mathrm{E}3\},\\
(\mathrm{F}3)\quad -y-z+2w&\le 0, & H&=\{\mathrm{E}1,\mathrm{E}4\},\\
(\mathrm{F}4)\quad 0&\le 0, & H&=\{\mathrm{E}3,\mathrm{E}4\},
\end{aligned}
\label{eq:afterx}
\end{equation}
each with $E=\{x\}$. Imbert's test flags nothing, since $\card(H)=2\le 1+\card(\{x\})$ for each of the four. The LP test, on the other hand, finds (F1) redundant, because the sum of (F2) and (F3) is $y+w\le 0$, and it finds (F4) redundant trivially.

\emph{Imbert's test alone, LP test at the end.} Suppose we keep all four inequalities of \eqref{eq:afterx} and eliminate $z$. Then (F2) and (F3) form a pair, giving a second copy of $y+w\le 0$ with $H=\{\mathrm{E}1,\mathrm{E}2,\mathrm{E}3,\mathrm{E}4\}$ and $E=\{x,z\}$, while (F1) and (F4) are copied. Condition \eqref{eq:imbertcond} fails for the new copy, since $4\not\le 1+2$, and Imbert's test deletes it. This is correct: the new copy is redundant given the copy (F1). A final LP pass then leaves $\{y+w\le 0\}$, which is the projection.

\emph{The interleaved procedure.} Now suppose that after the elimination of $x$ we delete (F1) and (F4), as the LP test instructs, leaving (F2) and (F3). Eliminating $z$ produces the single inequality $y+w\le 0$ with $H=\{\mathrm{E}1,\mathrm{E}2,\mathrm{E}3,\mathrm{E}4\}$ and $E=\{x,z\}$. Condition \eqref{eq:imbertcond} fails, the inequality is deleted, and the output is the empty system, whose solution set is $\mathbb{R}^2$.
\end{proof}

This failure is not particular to the example. Theorem~\ref{thm:imbert} asserts that a flagged inequality is redundant in the system of inequalities generated by the elimination, and the proof of the theorem relies on the presence of the other inequalities generated from the same root. The LP test removes inequalities from this reference system. In the example, the inequality that is deleted at the second step is redundant given (F1), but (F1) has already been removed, and the certificate provided by \eqref{eq:imbertcond} no longer refers to anything that is present. Any schedule in which the records of Definition~\ref{def:imbert} are carried across an LP deletion is exposed to the same failure. The code accompanying this paper contains further examples.

\subsection{A sound combination}
\label{sec:sound}

The two tests can be combined provided they never share records. The precise condition is the following.

\begin{algorithm}[t]
\caption{Alternation of Imbert's test and the LP test with epoch length $T$.}
\label{alg:alternation}
\begin{algorithmic}[1]
\While{variables remain to be eliminated}
  \State declare the current system to be the root and initialise the records of Definition~\ref{def:imbert}
  \For{$t=1,\dots,T$, while variables remain}
     \State eliminate the next variable, updating the records
     \State delete every inequality that violates \eqref{eq:imbertcond}
  \EndFor
  \If{variables remain}
     \State eliminate the next variable
     \State delete redundant inequalities with the LP test
  \EndIf
\EndWhile
\State delete redundant inequalities with the LP test \Comment{Imbert's test alone is not complete}
\end{algorithmic}
\end{algorithm}

\begin{theorem}
\label{prop:sound}
Run the elimination in any order and, after each step, apply Imbert's test, or the LP test, or the first followed by the second, in any pattern, subject to one rule: after every step at which the LP test has been applied, discard the records of Definition~\ref{def:imbert} and declare the current system to be the root for the steps that follow. Then after every step the current system describes the projection of $P$ onto the remaining variables, and the final system, after a last application of the LP test, is a non-redundant description of the projection.
\end{theorem}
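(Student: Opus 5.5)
We prove the claim by induction on the number of elimination steps. The invariant is stronger than the statement: after every step, the current system $S$ describes $\pi_{D}(P)$, where $D$ is the set of variables eliminated so far. Moreover, if the current epoch began at a root system $R$ describing $\pi_{D_0}(P)$, then $S$ is obtained from the full set $G$ of inequalities that plain elimination from $R$ would generate only by deleting inequalities flagged by \eqref{eq:imbertcond}. The starting point is the base case: $S=R$ is the original system, with fresh records.

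The inductive step splits according to which test is applied after the step.

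\emph{Imbert only.} We have to argue that deleting flagged inequalities inside an epoch leaves the solution set equal to that of $G$. By Lemma~\ref{lem:step} iterated (Corollary~\ref{cor:global}), the solution set of $G$ is the projection of the solution set of $R$, which is $\pi_D(P)$. Theorem~\ref{thm:imbert} says each flagged inequality is redundant in $G$. The delicate point is that several flagged inequalities are removed at once, and that earlier deletions change what is generated later. For this we invoke Imbert's theorem in its standard form. Within an epoch, the procedure of eliminating and discarding everything violating \eqref{eq:imbertcond} at each step, with records carried forward, yields a system equivalent to the full elimination from $R$. The records remain valid because no inequality has been removed except by Imbert's criterion, and the reference system in Imbert's theorem is exactly the one generated from the current root. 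This step is the main obstacle, since it requires reading Theorem~\ref{thm:imbert} as a statement about the iterated pruned process rather than only about $G$. I would justify it through the support characterisation behind Lemma~\ref{lem:support}. Every valid inequality on $\pi_D(\text{sol}(R))$ is implied by non-negative combinations of at most $|D\setminus D_0|+1$ root inequalities. An inequality whose history exceeds this bound, refined via $E_i$ and $I_i\cap O_k$, is never an extreme-ray combination. The extreme-ray combinations are themselves never flagged, and each is still generated because its ancestors had smaller supports and were also unflagged. Hence the retained set still contains every extreme-ray inequality, and so describes the projection.

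\emph{LP test, possibly after Imbert.} First apply the Imbert case to get a system describing $\pi_D(P)$. The LP test deletes inequalities one at a time, each implied by those retained, so the solution set is unchanged. The invariant's second clause would now fail, because deleted members of $G$ may be the certificates for later flags, as Proposition~\ref{prop:incompat} shows. The rule of the theorem restores it: declaring the current system the new root with $D_0:=D$ gives a root describing $\pi_{D_0}(P)$ whose generated set is trivially itself. Thereafter Theorem~\ref{thm:imbert} applies afresh, since the validity of elimination from the new root relies only on Lemma~\ref{lem:step} applied to this root's solution set, not on its derivation history.

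Finally, after the last step, $S$ describes the projection. The last LP pass tests the inequalities sequentially, as in Section~\ref{sec:lptest}, against those retained, so it preserves the solution set and leaves no inequality implied by the others. This is a non-redundant description.
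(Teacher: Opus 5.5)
Your proposal follows the paper's proof. Both argue by induction over epochs: within an epoch, soundness is taken from Imbert's algorithm relative to the epoch's root; LP deletions preserve the solution set; re-rooting after each LP step makes Theorem~\ref{thm:imbert} applicable again; and a final sequential LP pass gives non-redundancy. Your extra extreme-ray justification of the within-epoch step goes beyond the paper and can be made rigorous, but the reason you give (``its ancestors had smaller supports and were also unflagged'') does not suffice, since a parent in some derivation of an extreme ray can have the same support and be flagged. What you need is the double-description fact: every extreme ray of the next projection cone is either a copied extreme ray or a positive combination of two extreme rays of opposite sign on the eliminated variable. Combined with your observation that extreme rays never violate \eqref{eq:imbertcond}, this shows that all extreme rays survive the pruning.
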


\begin{proof}
Call an epoch a maximal sequence of consecutive steps ending with a step at which the LP test is applied, or with the last step. We show by induction on the epochs that at the beginning of each epoch the current system describes the projection of $P$ onto the variables remaining at that point. This holds for the first epoch. Within an epoch, each elimination step preserves the projection by Lemma~\ref{lem:step}, and each deletion by Imbert's test removes an inequality that, by Theorem~\ref{thm:imbert} applied to the root of the epoch, is redundant in the system generated from that root, exactly as in Imbert's algorithm~\cite{imbert1990}, in which flagged inequalities are deleted as they arise. Hence throughout the epoch the system describes the projection of the solution set of the root, which by the induction hypothesis is the projection of $P$. The deletions by the LP test at the end of the epoch preserve the solution set, so the system at the beginning of the next epoch, which is the new root, describes the correct projection. The final application of the LP test preserves the solution set and, being sequential, leaves no redundant inequality (Section~\ref{sec:lptest}).
\end{proof}

Proposition~\ref{prop:incompat} shows that the rule in Theorem~\ref{prop:sound} cannot simply be dropped. Note also that Imbert's test can flag nothing at the first step after a re-initialisation, because every history then has at most two elements, so there is no point in applying the LP test at every step. Algorithm~\ref{alg:alternation} is the pattern we use: Imbert's test alone for $T$ steps, then one step with the LP test, then re-initialisation. Running Imbert's test alone with the LP test only at the end, as in the first run in the proof of Proposition~\ref{prop:incompat}, is the case of a single epoch. In our experiments we used $T=3$. On some instances Algorithm~\ref{alg:alternation} is faster than the LP test alone, but on the instances of Section~\ref{sec:causal} it is not: Imbert's test detects too few of the redundancies, the intermediate systems grow, and the LP steps become more expensive than they would have been had they been applied throughout. This is what led us to consider the elimination order of the variables.

\section{Choosing the elimination order}
\label{sec:order}

\subsection{The greedy rule}
\label{sec:greedy}

The number of inequalities produced by an elimination step depends on which variable is eliminated. In the notation of Section~\ref{sec:fmstep}, eliminating $y_s$ produces
\begin{equation}
\nu(s)=|\Ipos_s|\,|\Ineg_s|+|\Izero_s|
\label{eq:nu}
\end{equation}
inequalities, and this number varies considerably from variable to variable. In the system \eqref{eq:example}, for instance, eliminating $y$ produces $\nu=2\cdot 2+1=5$ inequalities whereas eliminating $x$ produces $\nu=4\cdot 0+1=1$. The quantity $\nu(s)$ costs $O(m)$ to evaluate, and the usual greedy rule eliminates at each step a variable minimising it~\cite{huynh1992}.

On large instances this rule was way slower for us than eliminating the variables in a random order. The reason is that $\nu(s)$ counts the inequalities produced by the next step, redundant ones included, and takes no account of the effect of the choice on the steps that follow. On our instances, choosing the variable that produces the fewest inequalities now routinely leads to systems with many more non-redundant inequalities later. An elimination order has to be judged by the total cost it induces, and the size of the next system is a poor proxy for that cost.

\subsection{Lookahead on the number of non-redundant inequalities}
\label{sec:look-ahead}

We therefore replace $\nu(s)$ by the quantity that determines the cost of the subsequent steps, namely the number of inequalities that remain after the redundancies have been removed. For the current system $S$ and a candidate variable $s$, let
\begin{equation}
\hat\nu(s)=\big|\LPprune\big(\FMstep(S,s)\big)\big|,
\label{eq:nuhat}
\end{equation}
where $\FMstep(S,s)$ is the system obtained by eliminating $s$ and $\LPprune$ removes redundant inequalities with the LP test. Evaluating $\hat\nu(s)$ requires carrying out the elimination and the pruning, so choosing the next variable by this rule costs $r$ tentative eliminations, where $r$ is the number of variables still to be eliminated. These tentative eliminations are independent of one another, and we run them in parallel, one per core, and keep the smallest resulting system and discard the others (Algorithm~\ref{alg:look-ahead}). 

\begin{algorithm}[t]
\caption{Choice of the elimination order by one-step look-ahead.}
\label{alg:look-ahead}
\begin{algorithmic}[1]
\State $S\gets$ the initial system with redundancies removed;\quad $D\gets$ the set of variables to eliminate;\quad $\sigma\gets()$
\While{$D\neq\emptyset$}
  \ForAll{$s\in D$, in parallel}
     \State $S_s\gets \LPprune(\FMstep(S,s))$
  \EndFor
  \State $s^{*}\gets\arg\min_{s\in D}|S_s|$;\quad $S\gets S_{s^{*}}$;\quad $D\gets D\setminus\{s^{*}\}$;\quad append $s^{*}$ to $\sigma$
\EndWhile
\State \Return the projection $S$ and the elimination order $\sigma$
\end{algorithmic}
\end{algorithm}

\subsection{Cost}
\label{sec:cost}

The rule multiplies the amount of computation per step by at most $r$. If $r$ cores are available the wall-clock time per step is unchanged; with fewer cores the tentative eliminations are distributed over the cores available. The rule is nevertheless worthwhile on our instances, and by a wide margin, because the cost of a step grows faster than linearly with the size of the current system: a system of $m$ inequalities can produce $m^2/4$ inequalities, each of which has to be tested by an LP over a system of comparable size. Keeping every intermediate system small therefore reduces the cost of every subsequent step, and this saving outweighs the cost of the look-ahead.

We make no claim of optimality. The rule looks only one step ahead, and there will be instances on which a deeper look-ahead, or a different order altogether, does better; nor can the worst-case behaviour of Proposition~\ref{prop:growth} be excluded.

\subsection{Reusing the order}
\label{sec:reuse}

The order $\sigma$ returned by Algorithm~\ref{alg:look-ahead} deserves separate mention. It is a list of the eliminated variables, and once it is known the projection can be recomputed by an ordinary sequential run: eliminate the variables in the order $\sigma$, removing the redundant inequalities after each step, on a single core and with no search. The tentative eliminations, which are what consume the parallel resources, are thus a cost of finding $\sigma$ and not a cost of using it. This matters for two reasons. In the application of Section~\ref{sec:causal} the same causal structure is often marginalised more than once, for instance with different sets of additional constraints, and the order found in one of these computations is the natural first thing to try in the others. And replaying $\sigma$ on one core is the control experiment for the method: its running time measures the quality of the order on its own, separately from the cost of finding it, and it is the number against which the fixed-order times of Section~\ref{sec:results} should ultimately be compared.

\section{Entropic constraints from causal structures}
\label{sec:causal}

We now describe the systems that led us to the method. They arise in the entropic approach to causal structures, which goes back to the information-theoretic Bell inequalities of Braunstein and Caves~\cite{braunstein1988} and was developed into a general method in~\cite{fritzchaves2013,chaves2014,chaves2015,weilenmann2017}; we summarise it only to the extent needed to define the computational problem. For the role of inequality constraints in the classification of causal structures see also~\cite{khanna2024, khanna2025closing, khanna2026spurious, hlp2014}.

Let $X_1,\dots,X_N$ be discrete random variables. The Shannon entropies $H(X_S)$ of the non-empty subsets $S\subseteq\{1,\dots,N\}$ form a vector in $\mathbb{R}^{2^N-1}$, the entropy vector of the distribution~\cite{shannon1948,yeung2002}. Every entropy vector satisfies the elemental inequalities~\cite{yeung1997}, namely the monotonicity relations $H(X_i\,|\,X_{\{1,\dots,N\}\setminus\{i\}})\ge 0$ and the submodularity relations $I(X_i\!:\!X_j\,|\,X_K)\ge 0$ for $i\neq j$ and $K\subseteq\{1,\dots,N\}\setminus\{i,j\}$, where the conditional entropy and the conditional mutual information are the linear combinations $H(Y|Z)=H(YZ)-H(Z)$ and $I(Y\!:\!Z|W)=H(YW)+H(ZW)-H(W)-H(YZW)$ of subset entropies. There are
\begin{equation}
N+\binom{N}{2}\,2^{\,N-2}
\label{eq:elementalcount}
\end{equation}
elemental inequalities, and the cone they define, the Shannon cone, contains the closure of the set of entropy vectors. For $N\ge 4$ the containment is strict~\cite{zhangyeung1997,zhangyeung1998}. This is a limitation of the entropic method that is inherited by everything computed from the Shannon cone, but it does not affect the polyhedral computation itself.

A causal structure is a directed acyclic graph whose nodes are either observed or latent~\cite{pearl2009}. The conditional independences implied by the graph, which can be read off it by $d$-separation, are linear equalities $I(X\!:\!Y|Z)=0$ on the entropy vector. Adding these equalities to the elemental inequalities and then eliminating every coordinate $H(X_S)$ for which $S$ contains a latent node gives a system of entropic inequalities that is satisfied by the observed variables of every classical model of the causal structure. This is the entropic analogue of marginalisation, and it is a Fourier--Motzkin problem. (It is best to use the equalities first, to substitute variables away at no cost, so that the remaining task is the elimination of variables from a system of inequalities.) With $o$ observed and $\ell$ latent nodes the number of coordinates to be eliminated is
\begin{equation}
2^{\,o+\ell}-2^{\,o},
\label{eq:elimcount}
\end{equation}
which grows exponentially. A structure with four observed and three latent nodes already requires the elimination of $2^7-2^4=112$ of the $127$ coordinates, starting from the $679$ elemental inequalities of \eqref{eq:elementalcount} together with the equalities of the structure. Systems of this size are beyond a straightforward implementation of the method, and it is for these that the elimination order matters most.

\section{Computational results}
\label{sec:results}

All experiments use our Python implementation, which contains the elimination step, the sequential LP test, Imbert's test with the schedule of Algorithm~\ref{alg:alternation}, and the look-ahead rule of Algorithm~\ref{alg:look-ahead} with one tentative elimination per worker. The code, together with the benchmark instances and the scripts that produce the figure and the table, is available from the author.%
\footnote{A public repository will soon be added to an updated version of this preprint.}
In each experiment the comparison is with the same elimination, with the same LP test, run under a fixed order, which is what a straightforward implementation of the method does.

\subsection{Causal structures}
\label{sec:resultscausal}

We consider four instances of the kind described in Section~\ref{sec:causal}, each with more than $250$ non-redundant inequalities initially and more than $100$ variables to eliminate; the causal structures and the initial systems are specified in the code. For these instances we record, at every elimination step, the number of redundant inequalities that were produced and the number of non-redundant inequalities that were retained, both for the order found by Algorithm~\ref{alg:look-ahead} and for the fixed order. Since the time taken by a step is dominated by the LPs solved to remove the redundant inequalities, and the number of LPs solved equals the number of inequalities produced, these counts determine the running time.

Figure~\ref{causal} shows the results. Under the fixed order the number of redundant inequalities rises repeatedly to one or two orders of magnitude above the number of non-redundant ones, reaching values of order $10^4$ in the first instance, and every one of these inequalities has to be processed by an LP before it is discarded. Under the look-ahead order the number of non-redundant inequalities decreases almost monotonically and the number of redundant inequalities stays within a small factor of it throughout.

\subsection{Random polytopes}
\label{sec:resultsrandom}
Table~\ref{tab:random} gives the total running times for six random polytopes in $15$ variables, from each of which $12$ variables were eliminated. The random instances were generated by the routine included with the code. The look-ahead rule is faster than the fixed order by factors of between $6$ and $25$.

\begin{table}[t]
\centering
\small
\setlength{\tabcolsep}{5pt}
\begin{tabular}{ccccccc}
\toprule
Polytope & \begin{tabular}{@{}c@{}}Initial\\irredundant\end{tabular} & \begin{tabular}{@{}c@{}}Final\\irredundant\end{tabular} & \begin{tabular}{@{}c@{}}Variables\\(total / eliminated)\end{tabular} & \begin{tabular}{@{}c@{}}Fixed order\\time (s)\end{tabular} & \begin{tabular}{@{}c@{}}Algorithm~\ref{alg:look-ahead}\\time (s)\end{tabular} & Ratio \\
\midrule
1 & 52 & 14 & 15 / 12 & 55496.4560 & 4126.1571 & 13.4 \\
2 & 44 & 13 & 15 / 12 & 18152.0970 & 2536.8207 & \phantom{0}7.2 \\
3 & 46 & 14 & 15 / 12 & \phantom{0}3620.4586 & \phantom{0}142.7669 & 25.4 \\
4 & 55 & \phantom{0}5 & 15 / 12 & \phantom{00}301.6454 & \phantom{00}49.0690 & \phantom{0}6.1 \\
5 & 56 & 11 & 15 / 12 & 19652.5220 & 1164.9615 & 16.9 \\
6 & 53 & 10 & 15 / 12 & 38530.8608 & 6513.9060 & \phantom{0}5.9 \\
\bottomrule
\end{tabular}
\caption{Wall-clock running times for six random polytopes: Fourier--Motzkin elimination with the sequential LP test under a fixed order, and Algorithm~\ref{alg:look-ahead}. The last column is the ratio of the two times. See Section~\ref{sec:scope} for the resources used by the two methods.}
\label{tab:random}
\end{table}

\begin{figure}[p]
    \centering
    \includegraphics[width=1\textwidth]{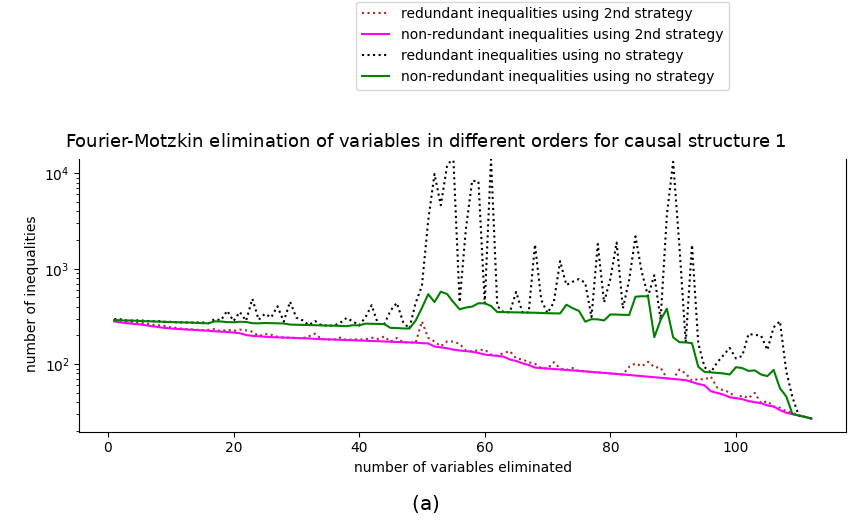}
    \vspace{1cm}

    \vfill

    \includegraphics[width=1\textwidth]{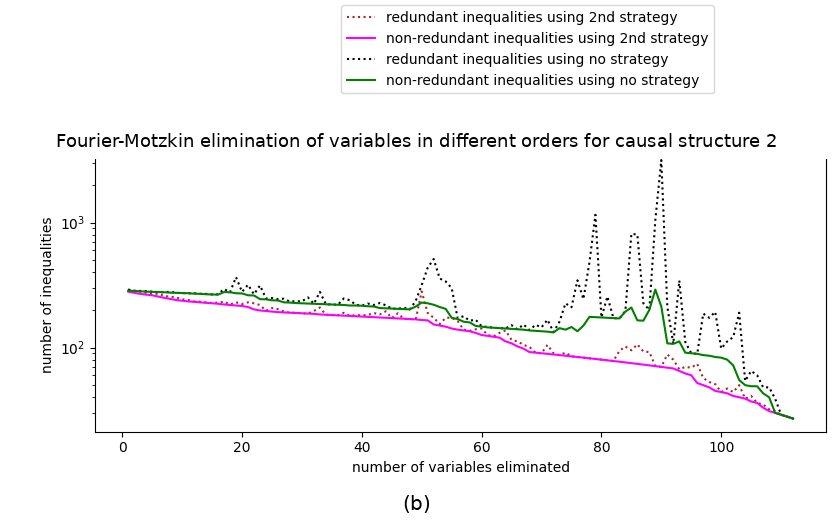} 
\end{figure}

\begin{figure}[p]
    \centering
    \includegraphics[width=1\textwidth]{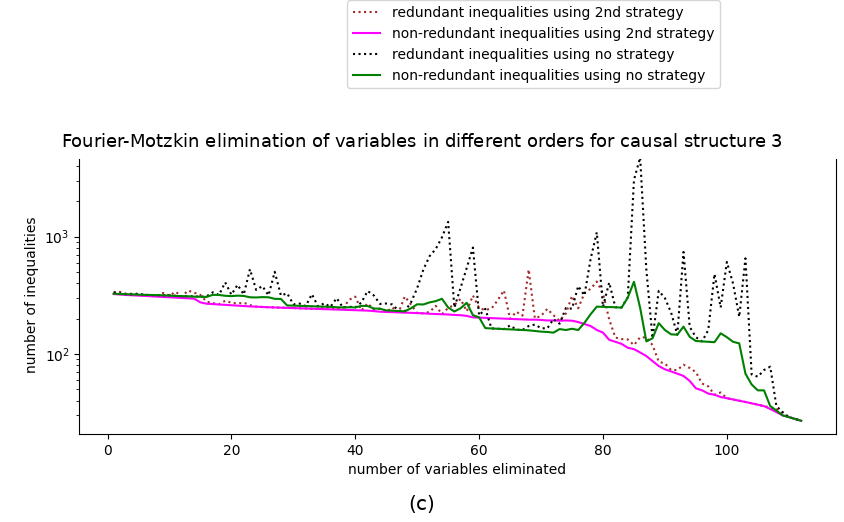} 
    \vspace{0.5cm}
    \vfill

    \includegraphics[width=1\textwidth]{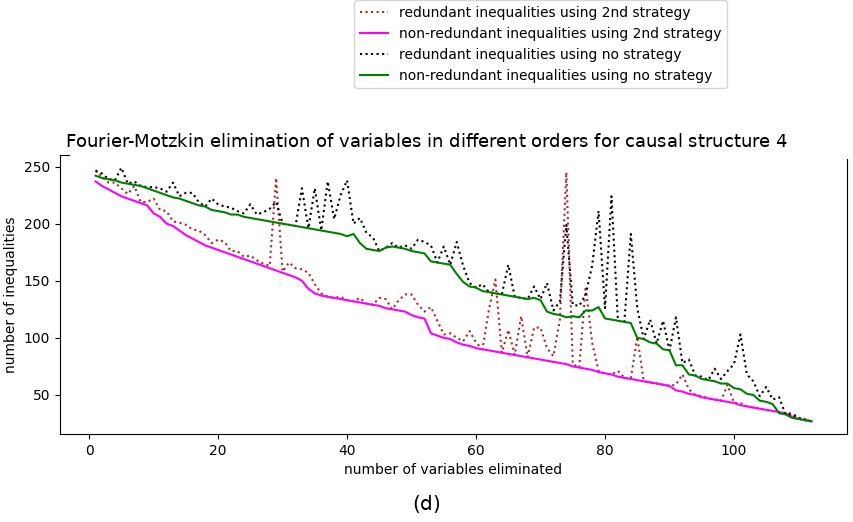} 

    \caption{Number of redundant inequalities produced and number of non-redundant inequalities retained at each elimination step, for four systems arising from causal structures, under the order found by Algorithm~\ref{alg:look-ahead} and under a fixed order. Panels (a)--(d) correspond to the four instances.}
    \label{causal}
\end{figure}

\subsection{Remarks on the comparison}
\label{sec:scope}

Three remarks are in order. First, the times are wall-clock times, and the two methods use different resources. The fixed-order run uses a single core, whereas Algorithm~\ref{alg:look-ahead} distributes up to $r$ tentative eliminations per step over the cores available. The total processor time of the look-ahead rule therefore exceeds its wall-clock time by up to a factor of $r$ per step, and on a machine with few cores the gain will be correspondingly smaller. The way to separate the quality of the order from the cost of finding it is to replay $\sigma$ on one core, which costs a single run (Section~\ref{sec:reuse}). Second, the rule is a heuristic, and Section~\ref{sec:cost} states what it does not promise. Third, for the causal-structure instances the system obtained is, as always in the entropic method, an outer approximation to the set of entropy vectors of the observed variables (Section~\ref{sec:causal}). This is a property of the entropic method and not of the elimination.

\section{Conclusion}
\label{sec:conclusion}

The usefulness of Fourier--Motzkin elimination depends on how the redundant inequalities it produces are managed. We have shown that the two standard tests for redundancy cannot be interleaved, because Imbert's test certifies redundancy with respect to the complete set of generated inequalities and the LP test removes elements of that set, and we have shown that they combine soundly provided the derivation records are re-initialised after every step at which the LP test is used. On our instances, however, the elimination order matters more than the choice of test. Choosing the next variable by the number of inequalities it produces is counterproductive, while choosing it by the number of non-redundant inequalities it leaves, at the cost of a one-step look-ahead that runs in parallel, reduced the running times by factors of up to $25$ and made the entropic marginalisation of causal structures with more than a hundred eliminated coordinates practicable. The order found can be replayed on a single core and reused in related computations.

Several questions remain. A look-ahead of depth $k$ costs $r^{k}$ tentative eliminations per step, and it would be interesting to know on which instances the extra cost pays. Since the orders found on structurally similar instances can be compared, one could also try to learn an ordering rule from them. The complexity of finding an optimal elimination order is, as far as we know, open; the analogous problem of minimising fill-in in sparse Gaussian elimination is NP-complete~\cite{yannakakis1981}, which suggests that heuristics such as ours are the appropriate tool. Finally, Lemma~\ref{lem:support} suggests that there may be tests for redundancy based on the derivation records that are stronger than Imbert's, and Proposition~\ref{prop:incompat} makes precise what such a test must respect if it is to be used alongside the LP test.

\section*{Additional Note}
Preliminary results appeared in the author's PhD thesis \cite{khanna2025exploring}. This a preliminary version of this article and some results will be added in the updated version, along with the entire code being made publicly available.

\section*{Acknowledgements}
The author thanks Matthew Pusey and Roger Colbeck for discussions. This work was supported by a studentship from the Department of Mathematics, University of York.

\appendix

\section{Proofs}
\label{app:proofs}

\begin{proof}[Proof of Corollary~\ref{cor:global}]
The first statement follows from Lemma~\ref{lem:step} by induction on the number of eliminated variables, since the projection onto the remaining coordinates is the composition of the projections along the eliminated ones. For (i), note that after all variables have been eliminated the projection is a subset of $\mathbb{R}^0$, which is either the single point of $\mathbb{R}^0$ or empty, and the point survives precisely when all the numerical inequalities hold.

For (ii), suppose that the variables are eliminated in the order $y_n,y_{n-1},\dots,y_1$, relabelling them if necessary, and let $P_k$ denote the projection of $P$ onto $(y_1,\dots,y_k)$, so that $P_n=P$. Fix $(y_1,\dots,y_k)\in P_k$. The system at the stage at which $y_{k+1}$ is about to be eliminated describes $P_{k+1}$, and the proof of Lemma~\ref{lem:step}, applied to that system, shows that the values of $y_{k+1}$ for which $(y_1,\dots,y_{k+1})\in P_{k+1}$ form the non-empty interval between the largest lower bound and the smallest upper bound that the system places on $y_{k+1}$ at the point $(y_1,\dots,y_k)$, with an infinite endpoint when there is no bound of the corresponding kind. A sequence of choices within these intervals therefore produces a point of $P$. Conversely, if $y^{*}\in P$ then $(y^{*}_1,\dots,y^{*}_k)\in P_k$ for every $k$, so $y^{*}_{k+1}$ lies in the interval determined by $(y^{*}_1,\dots,y^{*}_k)$ at every stage, and $y^{*}$ is produced.
\end{proof}

\begin{proof}[Proof of Proposition~\ref{prop:growth}]
Let $u=|\Ipos|$, $v=|\Ineg|$ and $w=|\Izero|$, so that $u+v+w=m$ and the output has $uv+w$ inequalities. If $w=m$ the output has $m$ inequalities. Otherwise $uv\le (u+v)^2/4=(m-w)^2/4$, so the output has at most $f(w):=(m-w)^2/4+w$ inequalities. The function $f$ is decreasing on $[0,m-2]$, where it is at most $f(0)=m^2/4$, and $f(m-1)=m-\tfrac34<m$. This proves the first claim. For the second, let $B_d=4(m/4)^{2^d}$, so that $B_0=m$, $B_{d+1}=B_d^2/4$ and $B_d\ge 4$ for all $d$ when $m\ge 4$. If the system after $d$ steps has $m_d\le B_d$ inequalities, then by the first claim the system after $d+1$ steps has at most $\max(m_d,m_d^2/4)\le\max(B_d,B_d^2/4)=B_d^2/4=B_{d+1}$ inequalities.
\end{proof}

\begin{proof}[Proof of Lemma~\ref{lem:support}]
Let $c^{\top}y\le\gamma$ be an inequality in the variables outside $D$ that is valid on $\pi_D(P)$. Regarded as an inequality on $\mathbb{R}^n$ with $c_D=0$, it is valid on $P$, and since $P\neq\emptyset$ the affine form of Farkas' lemma~\cite[Chapter~7]{schrijver1986} gives $\lambda\ge 0$ with $\lambda^{\!\top}\!A=c^{\top}$ and $\lambda^{\!\top}b\le\gamma$. Such a $\lambda$ lies in the cone $\Lambda_D=\{\lambda\in\mathbb{R}^m:\lambda\ge 0,\ (\lambda^{\!\top}\!A)_D=0\}$, the projection cone of~\cite{balas1998}. This is a polyhedral cone contained in the non-negative orthant, so it is pointed, and hence it is the set of non-negative combinations of its finitely many extreme rays. Let $\rho$ span an extreme ray of $\Lambda_D$, write $\supp\rho=\{i:\rho_i\neq 0\}$ and let $\sigma=|\supp\rho|$. The constraints of $\Lambda_D$ that hold with equality at $\rho$ are the $d$ equations $(\rho^{\top}\!A)_D=0$ and the $m-\sigma$ equations $\rho_i=0$ for $i\notin\supp\rho$. For $\rho$ to span an extreme ray these constraints must have rank $m-1$ (see, e.g., \cite[Chapter~8]{schrijver1986}), so $m-1\le d+(m-\sigma)$, i.e., $\sigma\le d+1$. Now write $\lambda=\sum_r\alpha_r\rho^{(r)}$ with $\alpha_r\ge 0$ and extreme rays $\rho^{(r)}$. The inequalities $(\rho^{(r)})^{\top}\!Ay\le(\rho^{(r)})^{\top}b$ are valid on $P$, involve only the variables outside $D$, are non-negative combinations of at most $d+1$ inequalities of \eqref{eq:system}, and combining them with the weights $\alpha_r$ gives $c^{\top}y\le\lambda^{\!\top}b\le\gamma$. This proves the first claim.

The finitely many inequalities obtained from the extreme rays are valid on $\pi_D(P)$ and imply every inequality valid on $\pi_D(P)$, in particular the inequalities of any description of $\pi_D(P)$, which exists by Corollary~\ref{cor:global}. They therefore describe $\pi_D(P)$. Two extreme rays cannot have the same support: if $\rho$ and $\rho'$ are non-proportional elements of $\Lambda_D$ with $\supp\rho=\supp\rho'$, then $\rho\pm\varepsilon(\rho'-\rho)\in\Lambda_D$ for all sufficiently small $\varepsilon>0$, so $\rho$ does not span an extreme ray. The number of extreme rays is therefore at most the number of non-empty subsets of $\{1,\dots,m\}$ with at most $d+1$ elements, which is $\sum_{k=1}^{d+1}\binom{m}{k}$. Finally, if $\pi_D(P)$ is full-dimensional then every description of it contains an inequality defining each of its facets, and a non-redundant description contains exactly one inequality per facet~\cite[Chapter~8]{schrijver1986}; so a non-redundant description has no more inequalities than the description by extreme rays.
\end{proof}

\end{document}